\titlespacing{\section}{0pt}{2ex}{1ex}
\titlespacing{\subsection}{0pt}{1ex}{0ex}
\titlespacing{\subsubsection}{0pt}{0.5ex}{0ex}
\newcolumntype{M}[1]{>{\centering\arraybackslash}m{#1}}
\newcolumntype{L}[1]{>{\raggedright\arraybackslash}m{#1}}
\newcolumntype{R}[1]{>{\raggedleft\arraybackslash}m{#1}}
\newcommand{\etavet}{\bm{\eta}}
\newcommand{\Unovet}{\bm{1}}
\newcommand{\bvet}{\bm{b}}
\newcommand{\xvet}{\bm{x}}
\newcommand{\Bvet}{\bm{B}}
\newcommand{\Cvet}{\bm{C}}
\newcommand{\Evet}{\bm{E}}
\newcommand{\Gvet}{\bm{G}}
\newcommand{\Hvet}{\bm{H}}
\newcommand{\Ivet}{\bm{I}}
\newcommand{\Kvet}{\bm{K}}
\newcommand{\Mvet}{\bm{M}}
\newcommand{\Pvet}{\bm{P}}
\newcommand{\Qvet}{\bm{Q}}
\newcommand{\Svet}{\bm{S}}
\newcommand{\Uvet}{\bm{U}}
\newcommand{\Wvet}{\bm{W}}
\newcommand{\Xvet}{\bm{X}}
\newcommand{\Zerovet}{\bm{0}}
\newcommand{\Omegavet}{\bm{\Omega}}
\newcommand{\Sigmavet}{\bm{\Sigma}}
\theoremstyle{definition}
\newtheorem{theorem}{Theorem}
\newcommand{\email}[1]{\affil{Email: {\upshape\href{mailto:#1}{\texttt{#1}}}}}
\renewenvironment{abstract}{%
	\if@twocolumn
	\section*{\abstractname}%
	\else %
	\begin{center}%
		{\bfseries \large\abstractname\vspace{\z@}}%
	\end{center}%
	\quotation
	\fi}
{\if@twocolumn\else\endquotation\fi}
\tikzset{
  basic/.style  = {draw, text width=2cm, drop shadow, font=\sffamily, rectangle},
  root/.style   = {basic, rounded corners=2pt, thin, align=center,
                   fill=green!30},
  level 2/.style = {basic, rounded corners=6pt, thin,align=center, fill=green!60,
                   text width=4em},
  level 3/.style = {basic, thin, align=left, fill=pink!60, text width=1.5em}
}
\newcommand{\relation}[3]
{
	\draw (#3.south) -- +(0,-#1) -| ($ (#2.north) $)
}
\newcommand{\githuburl}{\url{https://github.com/danigiro/ct-comp}.}
\title{Insights into regression-based cross-temporal forecast reconciliation}
\author{Daniele Girolimetto}
\author{Tommaso Di Fonzo}
\affil{Department of Statistical Sciences, University of Padua, Padova 35121, Italy}
\email{daniele.girolimetto@unipd.it}
\date{\today}
\begin{document}
	
\def\spacingset#1{\renewcommand{\baselinestretch}{#1}\small\normalsize}
\spacingset{1.1}

\thispagestyle{empty} \clearpage\maketitle

\begin{abstract}
	\noindent Cross-temporal forecast reconciliation aims to ensure consistency across forecasts made at different temporal and cross-sectional levels. We explore the relationships between sequential, iterative, and optimal combination approaches, and discuss the conditions under which a sequential reconciliation approach (either first-cross-sectional-then-temporal, or first-temporal-then-cross-sectional) is equivalent to a fully (i.e., cross-temporally) coherent iterative heuristic. Furthermore, we show that for specific patterns of the error covariance matrix in the regression model on which the optimal combination approach grounds, iterative reconciliation naturally converges to the optimal combination solution, regardless the order of application of the uni-dimensional cross-sectional and temporal reconciliation approaches. Theoretical and empirical properties of the proposed approaches are investigated through a forecasting experiment using a dataset of hourly photovoltaic power generation. The study presents a comprehensive framework for understanding and enhancing cross-temporal forecast reconciliation, considering both forecast accuracy and the often overlooked computational aspects, showing that significant improvement can be achieved in terms of memory space and computation time, two particularly important aspects in the high-dimensional contexts that usually arise in cross-temporal forecast reconciliation.
\end{abstract}

\begin{itemize}[nosep, align=left, leftmargin = !]
	\item[\textbf{Keywords}] \textit{Coherent forecasts; Cross-temporal forecast reconciliation; Sequential, iterative, and optimal combination approaches}
\end{itemize}
\vfill

\newpage
\spacingset{1.3}

\section{Introduction}
\label{sec:Introduction}
A large number of observed phenomena require accurate forecasts across different dimensions, e.g., different time periods and granularities, geographical regions, customer groups. Achieving coherent forecasts at any level of (dis)aggregation is challenging but crucial for optimal decision making, resource allocation and operational efficiency. As effective decision making depends on the support of accurate and coherent forecasts, forecast reconciliation, originally intended as a specific tool to improve the accuracy of the forecasts of a hierarchical time series \citep{Fliedner2001, Hyndman2011, Wickramasuriya2019}, has then become a key post-forecasting process aimed at improving the accuracy of forecasts for general linearly constrained multiple time series, of which hierarchical and grouped time series are a specific special case \citep{Panagiotelis2021, Girolimetto2024-ft}. The need for reconciliation arises when base forecasts of the components of a multivariate variable, often produced independently by separate organizational units or using different models, do not match the internal constraints applicable to that variable \citep{Kourentzes2022}.

Classical reconciliation methods for hierarchical time series, such as bottom-up and top-down, have been widely employed to address cross-sectionally incoherent forecasts. However, as \cite{Pennings2017} point out, these classical methods frequently overlook valuable information from different levels of the hierarchy. The introduction of regression-based reconciliation techniques in recent years has addressed this limitation. Consequently, hierarchical forecasting has significantly evolved to include modern least squares-based reconciliation techniques in the cross-sectional framework \citep{Hyndman2011, Wickramasuriya2019, Panagiotelis2021}, which were later extended for temporal hierarchies \citep{Athanasopoulos2017, Yang2017te, Spiliotis2019, Nystrup2020, Bergsteinsson2021, Nystrup2021}. The temporal coherence aspect is a natural extension, requiring the reconciliation of forecasts with different time granularities, such as months, quarters, or years, to ensure consistency \citep{Athanasopoulos2017}.

The cross-temporal framework combines cross-sectional and temporal dimensions to achieve fully coherent forecasts \citep{Kourentzes2019, Yagli2019, Punia2020, Spiliotis2020AE}. In this setting, complexity and dimensions of the problem grow very quickly along with the requested computational time and memory. \cite{Kourentzes2019} and \cite{Yagli2019} both tackle this issue using sequential procedures that alternate uni-dimensional reconciliation approaches. On the other hand, \cite{DiFonzoGiro2023} proposed an iterative heuristic, extending the results so far, and an optimal (in the least squares sense) cross-temporal forecast reconciliation approach, that simultaneously encompasses both  contemporaneous and temporal coherency. 

In recent years, there has been a significant increase in the application of cross-temporal reconciliation techniques to improve the accuracy of forecasting models by addressing inconsistencies across different time horizons. Notable applications of these methods include the works of \cite{Sanguri2022}, \cite{DiFonzoGiro2023SE}, \cite{English2024}, \cite{Quinn2024}. Moreover, \cite{Girolimetto2024-jm} proposed a probabilistic cross-temporal linear reconciliation technique, offering a more flexible approach to consider forecast uncertainties. Finally, it is worth mentioning the contribution by \cite{Rombouts2024}, who recently proposed the first Machine Learning-based cross-temporal forecast reconciliation technique, and performed an empirical comparison with the standard linear approaches.

In this paper, we address some open-issues related to the relationships between sequential, iterative and optimal combination cross-temporal forecast reconciliation approaches.  We discuss the conditions under which a sequential (either first-cross-sectional-then-temporal, or first-temporal-then-cross-sectional) approach, is equivalent to a fully (i.e, cross-temporally) coherent iterative heuristic. We also show that, for specific patterns of the error covariance matrix of the regression model on which the optimal combination approach grounds, an iterative reconciliation procedure `converges' to the optimal combination reconciliation solution regardless the order with which each uni-dimensional reconciliation is applied. The reduction of the computing effort is evaluated in the experiment of forecasting hourly photovoltaic power generation considered by \cite{Yagli2019} and  \cite{DiFonzoGiro2022, DiFonzoGiro2023SE}.

The paper is organised as follows. In \autoref{sec:recap}, we briefly recapitulate the main definitions and results for cross-sectional, temporal and cross-temporal forecast reconciliation, following the notation presented in \cite{Athanasopoulos2024} and extended in \cite{Girolimetto2024-jm}. In \autoref{sec:theorems}, we present two new results, in the form of theorems describing the relationships between sequential, iterative and optimal cross-temporal reconciliation approaches. To assess the impact of the new results on memory space and computation time, a forecasting experiment on an hourly power generation dataset is presented in \autoref{sec:forexp}. Conclusions follow in \autoref{sec: conclusion}. All the forecast reconciliation procedures considered in this paper are available in the R package \texttt{FoReco} \citep{FoReco2024}. 

\section{Regression-based cross-temporal reconciliation}
\label{sec:recap}

In this section, we briefly describe the optimal combination cross-temporal point forecast reconciliation methodology \citep{DiFonzoGiro2023}, and the different regression-based heuristic approaches proposed by \cite{Yagli2019}, \cite{Kourentzes2019}, and \cite{DiFonzoGiro2022, DiFonzoGiro2023, DiFonzoGiro2023SE}. For more details, the reader is referred to the original articles. The extension to the probabilistic forecast reconciliation setting, which is beyond the scope of this paper, may be pursued according to \cite{Girolimetto2024-jm}.

Let $\Xvet$ be the $[n \times \left(k^\ast + m\right)]$ matrix containing the values to be forecasted of a linearly constrained $n$-variate time series for all the temporal aggregation orders $k \in \mathcal{K}$ (e.g., for a quarterly time series, $\mathcal{K}=\left\{4, 2, 1\right\}$, where $k=4$ corresponds to the annual frequency, $k=2$ semi-annual, and $k=1$ quarterly, wich is the highest observed frequency) where $m = \displaystyle\max_k \mathcal{K}$ and $k^\ast = \displaystyle \sum_{k \in \mathcal{K} \backslash {k_1}} \frac{m}{k} $. Matrix $\Xvet$ can be written as:
$$
\Xvet =
\begin{bmatrix}
	\Xvet^{[m]} & \Xvet^{[k_{p-1}]}&  \ldots & \Xvet^{[1]}
\end{bmatrix}
=
\begin{bmatrix}
	\Uvet^{[m]} & \Uvet^{[k_{p-1}]}&  \ldots & \Uvet^{[1]}\\
	\Bvet^{[m]} & \Bvet^{[k_{p-1}]}&  \ldots & \Bvet^{[1]}
\end{bmatrix}    =
\begin{bmatrix} \Uvet \\ \Bvet \end{bmatrix},
$$
where $\Xvet^{[k]} = \left[\begin{array}{c}\Uvet^{[k]} \\ \Bvet^{[k]}\end{array}\right]$, $k \in \mathcal{K}$, is the $\left(n \times \displaystyle\frac{m}{k}\right)$ matrix containing the target values of the $n$ temporally aggregated series of order $k$. The variables are split into a $[n_u \times \left(k^\ast + m\right)]$ matrix $\Uvet$ and a $[n_b \times \left(k^\ast + m\right)]$ matrix $\Bvet$. These matrices represent, respectively, the upper ($n_u$) and bottom ($n_b$) level variables at any temporal aggregation order within the cross-sectional hierarchy, where the total number of variables is given by $n = n_u+n_b$. It is worth noting that in the cross-temporal framework, the `very bottom' variables are the highest-frequency bottom time series in matrix $\Bvet^{[1]}$.

\begin{table}[!t]
	\centering
	\begin{tabular}{@{\hskip 0.15cm}l@{\hskip 0.3cm}|@{\hskip 0.15cm}l@{\hskip 0.3cm}|@{\hskip 0.15cm}c@{\hskip 0.3cm}|cc}
		\toprule
		\multicolumn{3}{c|}{Regression-based reconciliation: structural representation} & \multicolumn{2}{c}{Constraints}\\
		\textbf{Matrix form}  & \textbf{Vector form} & \textbf{Reconciled} & \quad \textbf{cs}\quad\null  & \null\quad\textbf{te}\quad\null \\
		\midrule
		\addlinespace[0.2cm]
		$\widehat{\Xvet} = \Svet_{\text{cs}}\Bvet + \Evet$ & $\xvet = \Kvet_{\text{cs}}\bvet + \etavet$ &
		$\widetilde{\xvet}_{\text{cs}} = \Kvet_{\text{cs}}\Gvet_{\text{cs}}\widehat{\xvet}$ & \textbf{yes} & {\color{red}\textbf{no}} 
		\\[0.1cm]
		$\widehat{\Xvet} = \Xvet^{[1]}\Svet_{\text{te}}^\top + \Evet$ & $\xvet = \Kvet_{\text{te}}\xvet^{[1]} + \etavet$  &
		$\widetilde{\xvet}_{\text{te}} = \Kvet_{\text{te}}\Gvet_{\text{te}}\widehat{\xvet}$ & {\color{red}\textbf{no}} & \textbf{yes} \\[0.1cm]
		$\widehat{\Xvet} = \Svet_{\text{cs}}\Bvet^{[1]}\Svet_{\text{te}}^\top + \Evet$ & $\xvet =\Kvet_{\text{ct}}\bvet^{[1]} + \etavet$ &
		$\widetilde{\xvet}_{\text{ct}} = \Kvet_{\text{ct}}\Gvet_{\text{ct}}\widehat{\xvet}$ & \textbf{yes} & \textbf{yes} \\
		\addlinespace[0.1cm]
		\bottomrule
	\end{tabular}
	\caption{Matrix and vectorized forms of the structural representation of an $n$-variate linearly constrained time series considered at all temporal aggregation orders, in the cross-sectional (cs), temporal (te), and cross-temporal (ct) frameworks, respectively. The column `Reconciled' reports the formula of the reconciled forecasts coherent with only cross-sectional ($\widetilde{\xvet}_{\text{cs}}$), only temporal ($\widetilde{\xvet}_{\text{te}}$), and both ($\widetilde{\xvet}_{\text{ct}}$) constraints, respectively.}
	\label{tab:strucformulae}
\end{table}

Now, we denote $\Svet_{\text{cs}}$ and $\Svet_{\text{te}}$ the structural (summing) matrices in the cross-sectional and temporal frameworks, respectively,  mapping the relevant bottom variables into their complete counterpart vector. The structural representations valid for all $n$ variables and all temporal aggregation orders $k \in \mathcal{K}$, which take into account either cross-sectional or temporal coherence, are given by $\Xvet = \Svet_{\text{cs}}\Bvet$ and $\Xvet = \Xvet^{[1]}\Svet_{\text{te}}^\top$, respectively. Since $\Xvet^{[1]} = \Svet_{\text{cs}}\Bvet^{[1]}$, and $\Bvet = \Bvet^{[1]}\Svet_{\text{te}}^\top$, we can express in matrix form the cross-temporal structural representation as well: $\Xvet = \Svet_{\text{cs}}\Bvet^{[1]}\Svet_{\text{te}}^\top$. Moreover, it is convenient to consider the vectorized forms of the structural representations above:
$$
\xvet = \Kvet_{\text{cs}}\bvet, \quad \xvet = \Kvet_{\text{te}}\xvet^{[1]}, \quad \xvet = \Kvet_{\text{ct}}\bvet^{[1]},
$$
where $\xvet = \text{vec}(\Xvet^\top)$, $\bvet = \text{vec}(\Bvet^\top)$, $\xvet^{[1]} = \text{vec}({\Xvet^{[1]}}^\top)$,
$\bvet^{[1]} = \text{vec}({\Bvet^{[1]}}^\top)$, and 
$$
\Kvet_{\text{cs}} = \left(\Svet_{\text{cs}} \otimes \Ivet_{(k^\ast + m)}\right), \quad
\Kvet_{\text{te}} = \left(\Ivet_{n} \otimes \Svet_{\text{te}}\right), \quad
\Kvet_{\text{ct}} = \left(\Svet_{\text{cs}} \otimes \Svet_{\text{te}}\right) = \Kvet_{\text{cs}}\Kvet_{\text{te}},
$$
with $\Kvet_{\text{ct}}$ the cross-temporal structural matrix, encompassing both contemporaneous and temporal constraints. The results shown so far are summarized in \autoref{tab:strucformulae} along with the expression for the reconciled forecasts through the classical formula (\citealp{Hyndman2011}):
\begin{equation}
	\label{eq:recformula}
	\widetilde{\xvet}_{\text{r}} = \Kvet_{\text{r}}\Gvet_{\text{r}}\widehat{\xvet}, \quad \text{r} \in \{\text{cs}, \text{te}, \text{ct}\},
\end{equation}
with $\Gvet_{\text{r}} = \left(\Kvet_{\text{r}}^\top\Sigmavet^{-1}\Kvet_{\text{r}}\right)^{-1}\Kvet_{\text{r}}^\top\Sigmavet^{-1}$, where $\Sigmavet$ is a $\left[n\left(k^\ast + m\right) \times n\left(k^\ast + m\right)\right]$ base forecasts' error covariance matrix for the entire system of involved variables, at any cross-sectional and temporal aggregation levels, and $\Mvet_{\text{r}} = \Kvet_{\text{r}}\Gvet_{\text{r}}$ are oblique projection matrices  \citep{Panagiotelis2021}. More precisely, $\widetilde{\xvet}_{\text{cs}}$ is the oblique projection in the space $\mathcal{S}_{\text{cs}}$ spanned by the columns of $\Kvet_{\text{cs}}$, and $\widetilde{\xvet}_{\text{te}}$ is the oblique projection in the space $\mathcal{S}_{\text{te}}$ spanned by the columns of $\Kvet_{\text{te}}$. In addition, $\widetilde{\xvet}_{\text{ct}}$ is the oblique projection in the space $\mathcal{S}_{\text{ct}}$, given by the intersection of $\mathcal{S}_{\text{cs}}$ and $\mathcal{S}_{\text{te}}$ (i.e., $\mathcal{S}_{\text{ct}} = \mathcal{S}_{\text{cs}} \cap \mathcal{S}_{\text{te}}$), spanned by the columns of $\Kvet_{\text{ct}}$. These three approaches are optimal in the least squares sense, but only $\widetilde{\xvet}_{\text{ct}}$ is cross-temporally coherent.

In addition to the structural representation, forecast reconciliation can be formulated as a linear constrained quadratic program \citep{Stone1942, Byron1978, Byron1979, Solomou1993, DiFonzoGiro2023}, as described in \autoref{tab:projformulae}, starting from the zero-constrained representation
$$
\widehat{\Xvet} = \Xvet + \Evet \quad \text{s.t.} \; \Cvet_{\text{cs}}\Xvet = \Zerovet_{n_u \times (k^\ast+m)} \; \text{and/or} \; \Xvet\Cvet_{\text{te}}^\top = \Zerovet_{n \times k^\ast},
$$
where $\Cvet_{\text{cs}}$ and $\Cvet_{\text{te}}$ denote the zero-constraints matrices representing the relationships between the variables in cross-sectional and temporal frameworks, respectively. Also in this case, we may consider the vectorized forms of the constraints:
$$
\Hvet_{\text{cs}}\xvet = \Zerovet_{n_u (k^\ast+m) \times 1}, \quad \Hvet_{\text{te}}\xvet = \Zerovet_{n k^\ast \times 1}, \quad \Hvet_{\text{ct}}\xvet = \Zerovet_{(n_um+ nk^\ast) \times 1} ,
$$
where
$$
\Hvet_{\text{cs}} = \left(\Cvet_{\text{cs}} \otimes \Ivet_{(k^\ast + m)}\right), \quad
\Hvet_{\text{te}} = \left(\Ivet_{n} \otimes \Cvet_{\text{te}}\right), 
$$
and
$$
\Hvet_{\text{ct}} = \begin{bmatrix}
	\left[\Zerovet_{n_um\times nk^\ast} \quad \Ivet_m \otimes \Cvet_{\text{cs}}\right]\\
	\Ivet_{n} \otimes \Cvet_{\text{te}}
\end{bmatrix}
$$
is the cross-temporal matrix encompassing both contemporaneous and temporal constraints.

\begin{table}[t]
	\centering
	\begin{tabular}{l|l|l|cc}
		\toprule
		\multicolumn{3}{c|}{Regression-based reconciliation: zero-constrained representation} & \multicolumn{2}{c}{Constraints}\\
		\textbf{Matrix form}  & \textbf{Vector form} & \textbf{Reconciled} &  \textbf{cs} & \textbf{te} \\
		\midrule
		\addlinespace[0.2cm]
		\multicolumn{1}{l|}{$\widehat{\Xvet} = \Xvet + \Evet$} & \multicolumn{1}{l|}{$\widehat{\xvet} = \xvet + \eta$} & \multicolumn{1}{l|}{} & \multicolumn{1}{l}{} \\[0.25cm]
		$\; s.t. \; \Cvet_{\text{cs}}\Xvet = \Zerovet_{n_u \times (k^\ast+m)}$ & 
		$\; s.t. \; \Hvet_{\text{cs}}\xvet = \Zerovet_{n_u (k^\ast+m) \times 1}$ &
		$\widetilde{\xvet}_{\text{cs}} = \Mvet_{\text{cs}}\widehat{\xvet}$ & \textbf{yes} & {\color{red}\textbf{no}} 
		\\[0.1cm]
		$\; s.t. \; \Xvet\Cvet_{\text{te}}^\top = \Zerovet_{n \times k^\ast}$ & 
		$\; s.t. \; \Hvet_{\text{te}}\xvet = \Zerovet_{n k^\ast \times 1}$ &
		$\widetilde{\xvet}_{\text{te}} = \Mvet_{\text{te}}\widehat{\xvet}$ & {\color{red}\textbf{no}} & \textbf{yes} \\[0.1cm]
		$\; s.t. \; \begin{cases}
			\Cvet_{\text{cs}}\Xvet = \Zerovet_{n_u \times (k^\ast+m)}\\
			\Xvet\Cvet_{\text{te}}^\top = \Zerovet_{n \times k^\ast}
		\end{cases}$ & 
		$\; s.t. \; \Hvet_{\text{ct}}\xvet = \Zerovet_{(n_um+ nk^\ast) \times 1}$ &
		$\widetilde{\xvet}_{\text{ct}} = \Mvet_{\text{ct}}\widehat{\xvet}$ & \textbf{yes} & \textbf{yes} \\
		\addlinespace[0.1cm]
		\bottomrule
	\end{tabular}
	\caption{Matrix and vectorized forms of the zero-constrained  representation of an $n$-variate linearly constrained time series considered at all temporal aggregation orders, in the cross-sectional (cs), temporal (te), and cross-temporal (ct) frameworks. The column `Reconciled' reports the formula of the reconciled forecasts coherent with only cross-sectional ($\widetilde{\xvet}_{\text{cs}}$), only temporal ($\widetilde{\xvet}_{\text{te}}$), and both ($\widetilde{\xvet}_{\text{ct}}$) constraints, respectively.}
	\label{tab:projformulae}
\end{table}

We can express the reconciled forecasts through the formula proposed by \cite{Stone1942} \citep{Byron1978, Byron1979, DiFonzoGiro2023},
\begin{equation}
	\label{eq:constrained_rec_formula}
	\widetilde{\xvet}_{\text{r}} = \Mvet_{\text{r}}\widehat{\xvet}, \quad \text{r} \in \{\text{cs}, \text{te}, \text{ct}\},
\end{equation}
where $\Mvet_{\text{r}} = \left[\Ivet_{n(k^\ast+m)} - \Sigmavet \Hvet_{\text{r}}^\top\left(\Hvet_{\text{r}}\Sigmavet_{\text{r}}\Hvet_{\text{r}}^\top\right)^{-1}\Hvet_{\text{r}}\right]$ is a projection matrix. The formulae to compute $\widetilde{\xvet}_{\text{cs}}$ and $\widetilde{\xvet}_{\text{te}}$ have been used by \cite{Yagli2019} in two heuristic sequential reconciliation approaches, respectively called \textit{cross-sectional-then-temporal} (cst) and \textit{temporal-then-cross-sectional} (cts, see \citealp{DiFonzoGiro2023}), given by:
$$
\widetilde{\xvet}_{\text{cst}} = \Mvet_{\text{te}}\widetilde{\xvet}_{\text{cs}} = \Mvet_{\text{te}}\Mvet_{\text{cs}}\widehat{\xvet}, \quad
\widetilde{\xvet}_{\text{tcs}} = \Mvet_{\text{cs}}\widetilde{\xvet}_{\text{te}} = \Mvet_{\text{cs}}\Mvet_{\text{te}}\widehat{\xvet},
$$
where the forecasts in $\widetilde{\xvet}_{\text{cst}}$ ($\widetilde{\xvet}_{\text{tcs}}$) are not cross-sectionally (temporally) coherent.

\autoref{thm:ite} in \autoref{sec:theorems} establishes sufficient conditions under which the equality $\widetilde{\xvet}_{\text{cst}} = \widetilde{\xvet}_{\text{tcs}} = \widetilde{\xvet}_{\text{seq}}$ holds. Furthermore, it is shown that under the same conditions, $\widetilde{\xvet}_{\text{seq}}$ is equivalent to an optimal combination cross-temporal forecast reconciliation solution with a specific error covariance matrix.

A second heuristic proposed by \cite{Kourentzes2019} is able to obtain cross-temporally coherent forecasts starting from either of the two sequential approaches so far. In particular, the technique consists in an ensemble forecasting procedure that exploits the simple averaging of different forecasts:
\begin{align*}
	\widetilde{\xvet}_{ka_{\text{cst}}} &= (\overline{\Mvet}_{\text{te}}\otimes\Ivet_{n})\widetilde{\xvet}_{\text{cs}} = \overline{\Mvet}_{\text{te}}\Mvet_{\text{cs}}\widehat{\xvet}, \\
	\widetilde{\xvet}_{ka_{\text{tcs}}} &= (\Ivet_{k^\ast+m}\otimes\overline{\Mvet}_{\text{cs}})\widetilde{\xvet}_{\text{te}} = \overline{\Mvet}_{\text{cs}}\Mvet_{\text{te}}\widehat{\xvet},
\end{align*}
with 
$$
\overline{\Mvet}_{\text{cs}} = \displaystyle\frac{1}{p}\sum_{k \in \mathcal{K}} \Mvet^{[k]}  \quad \text{and} \quad  \overline{\Mvet}_{\text{te}} = \displaystyle\frac{1}{n}\sum_{i=1}^n \Mvet_i,
$$ where $\Mvet^{[k]}$ is the cross-sectional projection matrix for each temporal aggregation order and $\Mvet_i$ is the temporal projection matrix for each single variable \cite[see][]{DiFonzoGiro2023}. 

Finally,  \cite{DiFonzoGiro2023} proposed an iterative cross-temporal reconciliation approach alternating forecast reconciliation along one single dimension (cross-sectional or temporal) in a cyclic fashion, until a convergence criterion is met:
$$
\widetilde{\xvet}_{ite_{\text{cst}}} = \lim_{j \rightarrow \infty} \left(\Mvet_{\text{te}}\Mvet_{\text{cs}}\right)^j\widehat{\xvet}, \quad
\widetilde{\xvet}_{ite_{\text{tcs}}} = \lim_{j \rightarrow \infty} \left(\Mvet_{\text{cs}}\Mvet_{\text{te}}\right)^j\widehat{\xvet}.
$$
It is worth noting that, provided convergence was achieved in both cases, $\widetilde{\xvet}_{ite_{\text{cst}}}$ and $\widetilde{\xvet}_{ite_{\text{tcs}}}$ are cross-temporally coherent, but in general $\widetilde{\xvet}_{ite_{\text{cst}}} \ne \widetilde{\xvet}_{ite_{\text{tcs}}}$.

In \autoref{sec:theorems}, \autoref{thm:ite} establishes sufficient conditions under which the equality $\widetilde{\xvet}_{ka_{\text{cst}}} = \widetilde{\xvet}_{ka_{\text{tcs}}} = \widetilde{\xvet}_{\text{seq}}$ holds, and \autoref{thm:ite2} shows a sufficient conditions under which the equality $\widetilde{\xvet}_{ite_{\text{cst}}} = \widetilde{\xvet}_{ite_{\text{tcs}}} = \widetilde{\xvet}_{ite}$ holds, and $\widetilde{\xvet}_{ite}$ converges in norm to the optimal combination forecast reconciliation solution with a specific error covariance matrix.

\section{Two new results}
\label{sec:theorems}
The first result (\autoref{thm:ite}) shows that if the error covariance matrices used in either steps of an iterative reconciliation approach are constant across levels and time granularities, (i) fully reconciled forecasts are obtained in a single (two-step) iteration, (ii) the final result does not depend on the order of application of the uni-dimensional reconciliation approaches, and (iii) the solution is equivalent to that obtained through an optimal combination approach using a separable covariance matrix with a Kronecker product structure. 

\vskip0.5cm
\begin{theorem}\label{thm:ite}
	Let $\Wvet^{[k]}_j$, $k \in \mathcal{K}$, $j = 1, \ldots, \displaystyle\frac{m}{k}$, be the $(n \times n)$ cross-sectional hierarchy error covariance matrix, and $\Omegavet_i$, $i = 1,\ldots,n$, the $[(k^\ast + m) \times (k^\ast + m)]$ error covariance matrix for the $i$-th series temporal hierarchy. If $\Wvet^{[k]}_j = \Wvet$, $\forall k \in \mathcal{K}$ and $\forall j = 1, \ldots, \displaystyle\frac{m}{k}$, and $\Omegavet_i = \Omegavet$, $\forall i = 1,\ldots,n$, then:
	\begin{enumerate}
		\item the iterative cross-temporal forecast reconciliation approach reduces to a single (two-step) iteration. Furthermore, $\widetilde{\Xvet}_{\text{tcs}} = \widetilde{\Xvet}_{\text{cst}} = \widetilde{\Xvet}_{\text{seq}}$, where $\widetilde{\Xvet}_{\text{tcs}}$ ($\widetilde{\Xvet}_{\text{cst}}$) is the [$n\times (k^\ast+m)$] matrix of the \textit{temporal-then-cross-sectional} (\textit{cross-sectional-then-temporal}) reconciled forecasts;
		\item the heuristic cross-temporal reconciliation approach by \cite{Kourentzes2019} is equivalent to a sequential approach;
		\item denoting $\widetilde{\Xvet}_{\text{oct}}$  the optimal combination reconciled forecasts obtained by using the covariance matrix  $\Sigmavet_{\text{ct}} = \Wvet \otimes \Omegavet$, it is $ \widetilde{\Xvet}_{\text{seq}} = \widetilde{\Xvet}_{\text{oct}}$.
	\end{enumerate}
\end{theorem}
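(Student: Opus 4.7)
The strategy is to exploit the matrix (rather than vectorised) form of the reconciliation operators, which under the constant-covariance assumption carry a clean tensorial structure that makes the three claims essentially routine. The plan is to introduce two building-block projections: $\Mvet_W$ of size $(n\times n)$ built from $\Wvet$ via the standard cross-sectional formula, and $\Mvet_\Omega$ of size $[(k^\ast+m)\times(k^\ast+m)]$ built from $\Omegavet$ via the temporal one. Because $\Wvet^{[k]}_j \equiv \Wvet$ across all time points and $\Omegavet_i \equiv \Omegavet$ across all series, the full-system cross-sectional reconciliation reduces to a left-multiplication $\widetilde{\Xvet}_{\text{cs}} = \Mvet_W\widehat{\Xvet}$, while the full-system temporal reconciliation becomes a right-multiplication $\widetilde{\Xvet}_{\text{te}} = \widehat{\Xvet}\Mvet_\Omega^\top$. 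Both sequential orderings then collapse to the common sandwich $\widetilde{\Xvet}_{\text{seq}} = \Mvet_W\widehat{\Xvet}\Mvet_\Omega^\top$, which establishes $\widetilde{\Xvet}_{\text{cst}} = \widetilde{\Xvet}_{\text{tcs}}$. Since $\Cvet_{\text{cs}}\Mvet_W = \Zerovet$ and $\Cvet_{\text{te}}\Mvet_\Omega = \Zerovet$, this single sandwich already satisfies both zero-constraints, so any further cross-sectional or temporal iteration acts as the identity on it, proving claim (1).

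For claim (2), the constant-covariance assumption makes every $\Mvet^{[k]}$ entering $\overline{\Mvet}_{\text{cs}}$ equal to $\Mvet_W$ and every $\Mvet_i$ entering $\overline{\Mvet}_{\text{te}}$ equal to $\Mvet_\Omega$, so the averages coincide with the building-block projections and the ensemble heuristic of \cite{Kourentzes2019} immediately collapses to the sandwich above. For claim (3), I would substitute $\Sigmavet_{\text{ct}} = \Wvet\otimes\Omegavet$ and $\Kvet_{\text{ct}} = \Svet_{\text{cs}}\otimes\Svet_{\text{te}}$ into the structural optimal-combination projection from \eqref{eq:recformula}, and repeatedly invoke the Kronecker identities $(\Avet\otimes\Bvet)^{-1} = \Avet^{-1}\otimes\Bvet^{-1}$ and $(\Avet\otimes\Bvet)(\Cvet\otimes\Dvet) = (\Avet\Cvet)\otimes(\Bvet\Dvet)$. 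The Gram term $\Kvet_{\text{ct}}^\top\Sigmavet_{\text{ct}}^{-1}\Kvet_{\text{ct}}$ factorises as $(\Svet_{\text{cs}}^\top\Wvet^{-1}\Svet_{\text{cs}})\otimes(\Svet_{\text{te}}^\top\Omegavet^{-1}\Svet_{\text{te}})$, and after cancellation the entire projection becomes $\Mvet_W\otimes\Mvet_\Omega$. The vec identity $\text{vec}(\Bvet\Yvet^\top\Avet^\top) = (\Avet\otimes\Bvet)\text{vec}(\Yvet^\top)$ then devectorises the action on $\widehat{\xvet} = \text{vec}(\widehat{\Xvet}^\top)$ back into $\Mvet_W\widehat{\Xvet}\Mvet_\Omega^\top = \widetilde{\Xvet}_{\text{seq}}$, closing the chain of equalities.

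The only real obstacle is bookkeeping: keeping the vec convention $\xvet = \text{vec}(\Xvet^\top)$ consistent with the left/right roles of $\Mvet_W$ and $\Mvet_\Omega$ and with the order of Kronecker factors throughout, so that claim (3) truly identifies with claim (1). It also helps to have both the structural form $\Svet_{\text{cs}}(\Svet_{\text{cs}}^\top\Wvet^{-1}\Svet_{\text{cs}})^{-1}\Svet_{\text{cs}}^\top\Wvet^{-1}$ and the zero-constrained form $\Ivet - \Wvet\Cvet_{\text{cs}}^\top(\Cvet_{\text{cs}}\Wvet\Cvet_{\text{cs}}^\top)^{-1}\Cvet_{\text{cs}}$ of $\Mvet_W$ on hand (and analogously for $\Mvet_\Omega$), so that ``$\Cvet\Mvet = \Zerovet$'' arguments and Kronecker-factorisation arguments can be invoked side by side. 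No deeper algebraic identities beyond the standard Kronecker calculus are needed.
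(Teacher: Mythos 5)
Your proposal is correct and follows essentially the same route as the paper: the same two building-block oblique projections $\Svet_{\text{cs}}(\Svet_{\text{cs}}^\top\Wvet^{-1}\Svet_{\text{cs}})^{-1}\Svet_{\text{cs}}^\top\Wvet^{-1}$ and $\Svet_{\text{te}}(\Svet_{\text{te}}^\top\Omegavet^{-1}\Svet_{\text{te}})^{-1}\Svet_{\text{te}}^\top\Omegavet^{-1}$, the same sandwich $\Mvet_W\widehat{\Xvet}\Mvet_\Omega^\top$ with coherence forcing the iteration to stop after one two-step pass, the same collapse of the averaged KA projections, and the same Kronecker factorisation of the optimal-combination projection into $\Mvet_W\otimes\Mvet_\Omega$ under $\Sigmavet_{\text{ct}}=\Wvet\otimes\Omegavet$. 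Your explicit handling of the $\xvet=\text{vec}(\Xvet^\top)$ convention and the $\Cvet_{\text{cs}}\Mvet_W=\Zerovet$, $\Cvet_{\text{te}}\Mvet_\Omega=\Zerovet$ coherence check merely spells out steps the paper leaves implicit.
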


\begin{proof} \;\
	\begin{enumerate}
		\item If $\Wvet^{[k]}_j = \Wvet$, $\forall k \in \mathcal{K}$ and $\forall j = 1, \ldots, \displaystyle\frac{m}{k}$, and $\Omegavet_i = \Omegavet$, $\forall i = 1,\ldots,n$,
		the cross-sectional and temporal reconciled forecasts are respectively given by
		$\widetilde{\Xvet}_{\text{cs}}=\Mvet_{\text{cs}}^\ast\widehat{\Xvet}$, and
		$\widetilde{\Xvet}_{\text{te}}=\widehat{\Xvet}{\Mvet_{\text{te}}^\ast}^\top$, where $\Mvet_{\text{cs}}^\ast$ and $\Mvet_{\text{te}}^\ast$ are the $(n \times n)$ and $\left[\left(k^\ast + m\right) \times \left(k^\ast + m\right)\right]$, respectively, oblique projection matrices
		$$
		\Mvet_{\text{cs}}^\ast = 
		\Svet_{\text{cs}}\left(\Svet_{\text{cs}}^\top\Wvet^{-1}\Svet_{\text{cs}}\right)^{-1}\Svet_{\text{cs}}^\top\Wvet^{-1}, \quad
		\Mvet_{\text{te}}^\ast = \Svet_{\text{te}} \left(\Svet_{\text{te}}^\top\Omegavet^{-1}\Svet_{\text{te}}\right)^{-1}\Svet_{\text{te}}^\top\Omegavet^{-1} .
		$$
		It easy to check that
		$\widetilde{\Xvet}_{\text{cst}} = \Mvet_{\text{cs}}^\ast \widehat{\Xvet} {\Mvet_{\text{te}}^\ast}^\top = \widetilde{\Xvet}_{\text{tcs}}$.
		In addition, as both $\widetilde{\Xvet}_{\text{cst}}$ and $\widetilde{\Xvet}_{\text{tcs}}$ are cross-temporally coherent, it is
		$\widetilde{\Xvet}_{\text{tcs}} = \widetilde{\Xvet}_{\text{cst}} = \widetilde{\Xvet}_{\text{seq}}$.
		\item Under the conditions of the theorem we have: $\widetilde{\Xvet}_{ka_{\text{tcs}}}=\overline{\Mvet}_{\text{cs}}\widehat{\Xvet}{\Mvet_{\text{te}}^\ast}^\top$ and $\widetilde{\Xvet}_{ka_{\text{cst}}}=\Mvet_{\text{cs}}^\ast\widehat{\Xvet}\;\overline{\Mvet}_{\text{te}}^\top$.
		It follows
		that
		$$
		\overline{\Mvet}_{\text{cs}} = \displaystyle\frac{1}{p}\sum_{k \in \mathcal{K}} \Mvet^{[k]} = \Mvet_{\text{cs}}^\ast, \quad
		\overline{\Mvet}_{\text{te}} = \displaystyle\frac{1}{n}\sum_{i=1}^n \Mvet_i= \Mvet_{\text{te}}^\ast ,
		$$
		and then 
		$$
		\begin{cases}	
			\widetilde{\Xvet}_{ka_{\text{tcs}}} = \widetilde{\Xvet}_{\text{tcs}}\\
			\widetilde{\Xvet}_{ka_{\text{cst}}} = \widetilde{\Xvet}_{\text{cst}}	
		\end{cases}
		\quad 
		{\Longrightarrow} \quad \widetilde{\Xvet}_{ka_{\text{tcs}}} = \widetilde{\Xvet}_{ka_{\text{cst}}} = \widetilde{\Xvet}_{\text{seq}}.
		$$
		
		\item 
		Recalling that $\Svet_{\text{ct}} = \Svet_{\text{cs}} \otimes \Svet_{\text{te}}$, after a few algebraic steps we find that the optimal combination cross-temporal reconciled forecasts may be expressed as:
		$$
		\widetilde{\xvet}_{\text{oct}} = 
		\Svet_{\text{ct}} \left(\Svet_{\text{ct}}^\top \Sigmavet_{\text{ct}}^{-1}\Svet_{\text{ct}}\right)^{-1}\Svet_{\text{ct}}^\top\Sigmavet_{\text{ct}}^{-1}\widehat{\xvet} =
		\left(\Mvet_{\text{cs}}^\ast \otimes \Mvet_{\text{te}}^\ast\right)\widehat{\xvet} .
		$$
		On the other hand, 
		as
		$\widetilde{\xvet}_{\text{cs}} = \left(\Mvet_{\text{cs}}^\ast \otimes \Ivet_{k^\ast+m}\right)\widehat{\xvet}$
		and
		$\widetilde{\xvet}_{\text{te}} = \left(\Ivet_n \otimes \Mvet_{\text{te}}^\ast\right)\widehat{\xvet}$, 
		it follows
		$$
		\widetilde{\xvet}_{\text{tcs}} = \widetilde{\xvet}_{\text{cst}} = \left(\Mvet_{\text{cs}}^\ast \otimes \Mvet_{\text{te}}^\ast\right)\widehat{\xvet} =
		\widetilde{\xvet}_{\text{oct}}.
		$$
	\end{enumerate}
\end{proof}

In general, when an iterative reconciliation procedure `converges' to an optimal combination reconciliation approach, this happens regardless the order with which each uni-dimensional reconciliation is applied. A well-known example is the iterative approach where the diagonal covariance matrices computed using the one-step-ahead in-sample forecast errors are used in the cross-sectional phase, and the diagonal matrix ‘‘which contains estimates of the in-sample one-step-ahead error variances across each level’’ (\citealp{Athanasopoulos2017}, p. 64) in the temporal one. \autoref{fig:norm} shows the Frobenius norm of the difference between the reconciled forecasts obtained through the optimal $oct$ (with cross-temporal series variance scaling covariance) and iterative heuristic $ite_{\text{tcs}}$ (alternating the temporal series variance scaling and the cross-sectional series variance reconciliation approaches) formulae. The convergence is shown based on 350 replications of the forecasting experiment detailed in \autoref{sec:forexp}, with different convergence tolerance levels ($\delta = 10^{-5}$ in red, $\delta = 10^{-6}$ in green, and $\delta = 10^{-10}$ in blue). We obtain similar results using $ite_{\text{cst}}$.

These empirical observation is theoretically justified by \autoref{thm:ite2}, that presents a sufficient condition under which an iterative reconciliation heuristic converges to an optimal combination approach.

\begin{figure}[tb]
	\centering
	\includegraphics[width=\linewidth]{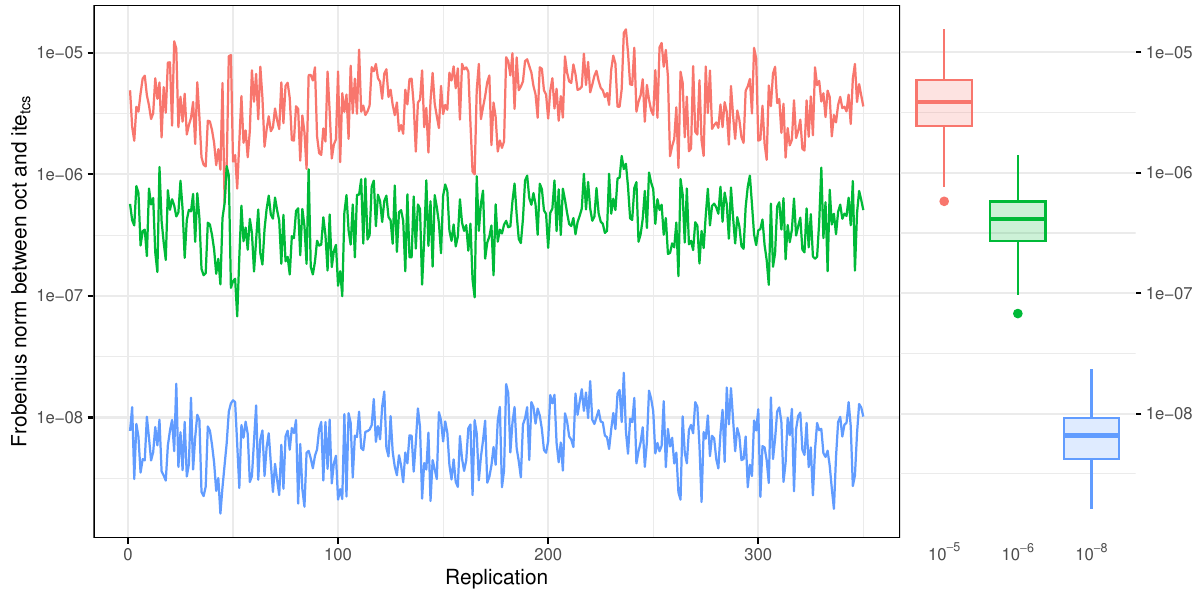}
	\caption{Frobenius norm of the difference between the matrices of the reconciled forecasts using iterative heuristic $ite_{\text{tcs}}$ alternating the temporal series variance scaling and the cross-sectional series variance approaches, converges to $oct$ with the cross-temporal series variance scaling covariance in the empirical application (350 replications of the forecasting experiment) described in \autoref{sec:forexp}. The different convergence tolerance ($\delta$) values are reported with the red ($\delta = 10^{-5}$), green ($\delta = 10^{-6}$) and blue ($\delta = 10^{-10}$).}
	\label{fig:norm}
\end{figure}

\vskip0.25cm
\begin{theorem}\label{thm:ite2}
	Let $\Wvet^{[k]}$, $k \in \mathcal{K}$, be the cross-sectional hierarchy error covariance matrix with $\Sigmavet_{\text{cs}} = \Pvet^\top\mbox{diag}(\Wvet^{[m]}, \dots, \Wvet^{[1]})\Pvet$, and $\Omegavet_i$, $i = 1,\ldots,n$, the $i$-th series temporal hierarchy error covariance matrix with $\Sigmavet_{\text{te}} = \mbox{diag}(\Omegavet_1, \dots, \Omegavet_n)$, with $\Pvet$ the commutation matrix operating on matrices of dimension $[n \times (k^\ast + m)]$ (e.g., $\Xvet$), such that $\Pvet\mbox{vec}\left(\Xvet\right)=\mbox{vec}\left(\Xvet^\top\right)$. If $\Sigmavet_{\text{cs}} = \Sigmavet_{\text{te}}$, then the iterative reconciled forecasts converge in norm to the optimal combination reconciled forecasts with cross-temporal covariance matrix $\Sigmavet_{\text{ct}}$. 
\end{theorem}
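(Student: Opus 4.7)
The plan is to recognize the iterative heuristic $(\Mvet_{\text{te}}\Mvet_{\text{cs}})^j$ (and its symmetric counterpart $(\Mvet_{\text{cs}}\Mvet_{\text{te}})^j$) as an instance of alternating projections in a common Hilbert-space metric, and then invoke the classical finite-dimensional von Neumann alternating projection theorem. Writing $\Sigmavet := \Sigmavet_{\text{cs}} = \Sigmavet_{\text{te}}$, I would endow $\mathbb{R}^{n(k^\ast + m)}$ with the inner product $\langle \xvet, \yvet\rangle_{\Sigmavet^{-1}} := \xvet^\top \Sigmavet^{-1}\yvet$. The first task is to show that the two projection matrices from \eqref{eq:constrained_rec_formula}, $\Mvet_{\text{r}} = \Ivet - \Sigmavet\Hvet_{\text{r}}^\top(\Hvet_{\text{r}}\Sigmavet\Hvet_{\text{r}}^\top)^{-1}\Hvet_{\text{r}}$ for $\text{r} \in \{\text{cs}, \text{te}\}$, are both \emph{orthogonal} projections with respect to this common inner product. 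Since each $\Mvet_{\text{r}}$ is already an idempotent whose range is $\mathcal{S}_{\text{r}}$, the only non-trivial point is self-adjointness in the $\Sigmavet^{-1}$ metric, which reduces to observing that
$$
\Sigmavet^{-1}\Mvet_{\text{r}} = \Sigmavet^{-1} - \Hvet_{\text{r}}^\top\bigl(\Hvet_{\text{r}}\Sigmavet\Hvet_{\text{r}}^\top\bigr)^{-1}\Hvet_{\text{r}}
$$
is symmetric.

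Once both maps are identified as orthogonal projections in the same Euclidean metric, projecting onto $\mathcal{S}_{\text{cs}}$ and $\mathcal{S}_{\text{te}}$ respectively, the finite-dimensional von Neumann theorem gives norm convergence
$$
\lim_{j\to\infty} (\Mvet_{\text{te}}\Mvet_{\text{cs}})^j = \lim_{j\to\infty} (\Mvet_{\text{cs}}\Mvet_{\text{te}})^j = \Pi_{\mathcal{S}_{\text{cs}}\cap\mathcal{S}_{\text{te}}},
$$
where $\Pi_{\mathcal{S}_{\text{cs}}\cap\mathcal{S}_{\text{te}}}$ denotes the orthogonal projection onto $\mathcal{S}_{\text{cs}}\cap\mathcal{S}_{\text{te}}$ in the $\Sigmavet^{-1}$ metric. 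Using the identity $\mathcal{S}_{\text{ct}} = \mathcal{S}_{\text{cs}}\cap\mathcal{S}_{\text{te}}$ recalled in \autoref{sec:recap} and the uniqueness of the orthogonal projection onto a given subspace, this limit must coincide with the optimal combination projection $\Mvet_{\text{ct}}$ built from $\Sigmavet_{\text{ct}} = \Sigmavet$. The conclusion $\widetilde{\xvet}_{ite_{\text{cst}}} = \widetilde{\xvet}_{ite_{\text{tcs}}} = \widetilde{\xvet}_{\text{oct}}$ then follows immediately.

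The main subtlety, in my view, lies in the hypothesis $\Sigmavet_{\text{cs}} = \Sigmavet_{\text{te}}$ itself: the two matrices are block-diagonal in incompatible orderings (grouped by temporal aggregation level versus grouped by variable), so their equality is a structural constraint that must be interpreted relative to a single fixed vectorization of $\widehat{\Xvet}$. This is exactly the role of the commutation matrix $\Pvet$ appearing in the statement, which re-indexes $\mbox{diag}(\Wvet^{[m]},\dots,\Wvet^{[1]})$ into the $\mbox{vec}(\widehat{\Xvet}^\top)$ convention used to define both $\Mvet_{\text{cs}}$ and $\Mvet_{\text{te}}$. Once this alignment is in place, the symmetry of $\Sigmavet^{-1}\Mvet_{\text{r}}$ is routine and von Neumann's theorem applies directly; I do not anticipate further obstacles.
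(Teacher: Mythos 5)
Your proposal is correct and follows essentially the same route as the paper: both arguments hinge on recognizing that, once $\Sigmavet_{\text{cs}}=\Sigmavet_{\text{te}}=\Sigmavet_{\text{ct}}=\Sigmavet$, the two uni-dimensional reconciliation maps become orthogonal projections onto $\mathcal{S}_{\text{cs}}$ and $\mathcal{S}_{\text{te}}$ in a common geometry, after which von Neumann's alternating projection theorem yields convergence to the projection onto $\mathcal{S}_{\text{cs}}\cap\mathcal{S}_{\text{te}}=\mathcal{S}_{\text{ct}}$, identified with $\Mvet_{\text{ct}}$. The only difference is presentational: you work directly in the $\Sigmavet^{-1}$-weighted inner product and check self-adjointness of $\Mvet_{\text{r}}$ there, whereas the paper whitens via the Cholesky factor $\Qvet$ (with $\Sigmavet^{-1}=\Qvet^\top\Qvet$) and applies the theorem to the transformed Euclidean-orthogonal projectors $\overline{\Mvet}_{\text{r}}$ — two isometric formulations of the same argument.
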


\begin{proof}
	Let $\widetilde{\Xvet}_{\text{tcs}}$ and $\widetilde{\Xvet}_{\text{cst}}$ be the $[n \times (k^\ast + m)]$ matrix of the \textit{temporal-then-cross-sectional} and \textit{cross-sectional-then-temporal} iterative reconciled forecasts using $\Wvet^{[k]}$ ($k \in \mathcal{K}$) and $\Omegavet_i$ ($i = 1, \dots, n$) as the cross-sectional and temporal covariance matrices, respectively. The iterative solution obtained at a given complete recursion $J \ge 1$, may be written as the result of alternating oblique projections, such that $\widetilde{\xvet}_{\text{tcs}} = (\Mvet_{\text{cs}}\Mvet_{\text{te}})^J\widehat{\xvet}$, with
	\begin{equation}
		\label{eq:MM}
		\begin{aligned}
			\Mvet_{\text{cs}} &= \Kvet_{\text{cs}} \left[\Kvet_{\text{cs}}^\top \Sigmavet_{\text{cs}}^{-1} \Kvet_{\text{cs}}\right]^{-1} \Kvet_{\text{cs}}^\top\Sigmavet_{\text{cs}}^{-1}\\
			\Mvet_{\text{te}} &= \Kvet_{\text{te}} \left[\Kvet_{\text{te}}^\top \Sigmavet_{\text{te}}^{-1} \Kvet_{\text{te}}\right]^{-1} \Kvet_{\text{te}}^\top\Sigmavet_{\text{te}}^{-1}
		\end{aligned} .
	\end{equation}
	
	\noindent We want to prove that
	$
	\lVert \widetilde{\xvet}_{\text{tcs}} - \widetilde{\xvet}_{\text{oct}}\rVert = \lVert (\Mvet_{\text{cs}}\Mvet_{\text{te}})^J\widehat{\xvet} - \Mvet_{\text{ct}}\widehat{\xvet}\rVert \xrightarrow{J \rightarrow +\infty} 0 
	$.
	
	\noindent Since $\Sigmavet_{\text{te}} = \Sigmavet_{\text{cs}} = \Sigmavet_{\text{ct}} = \Sigmavet$, where $\Sigmavet$ is p.d., we can express its inverse as $\Sigmavet^{-1} = \Qvet^\top\Qvet$, where $\Qvet$ is a (unique) upper triangular matrix with positive diagonal elements. Thus, denoting $\overline{\xvet} = \Qvet\widehat{\xvet}$, it can be shown that
	$$
	\widetilde{\xvet}_{\text{cs}} = \Qvet^{-1} \overline{\Mvet}_{\text{cs}}\overline{\xvet} , \quad
	\widetilde{\xvet}_{\text{te}} = \Qvet^{-1} \overline{\Mvet}_{\text{te}}\overline{\xvet} , \quad
	\widetilde{\xvet}_{\text{oct}} = \Qvet^{-1} \overline{\Mvet}_{\text{ct}}\overline{\xvet} ,
	$$
	where $\overline{\Mvet}_{\text{r}} = \overline{\Kvet}_{\text{r}} \big(\overline{\Kvet}_{\text{r}}^\top \overline{\Kvet}_{\text{r}}\big)^{-1} \overline{\Kvet}_{\text{r}}^\top$, with $\overline{\Kvet}_{\text{r}} = \Qvet\Kvet_{\text{r}}$, $\text{r} \in \{\text{cs}, \; \text{te}\}$, and $\overline{\Mvet}_{\text{ct}} = \overline{\Svet}_{\text{ct}}\big(\overline{\Svet}_{\text{ct}}^\top \overline{\Svet}_{\text{ct}}\big)^{-1} \overline{\Svet}_{\text{ct}}^\top$, with $\overline{\Svet}_{\text{ct}} = \Qvet\Svet_{\text{ct}}$. Matrices $\overline{\Mvet}_{\text{r}}$, $\text{r} \in \{\text{cs}, \; \text{te}, \; \text{ct}\}$ are \textit{orthogonal} projection matrices onto the linear sub-spaces spanned by the columns of, respectively, $\Qvet\Svet_{\text{ct}}$, $\Qvet\Kvet_{\text{cs}}$, and $\Qvet\Kvet_{\text{te}}$. The oblique projection matrices (\ref{eq:MM}) can thus be written using the orthogonal projection matrices $\overline{\Mvet}_{\text{r}}$, $\text{r} \in \{\text{cs}, \; \text{te}\}$, such that $\Mvet_{\text{r}} = \Qvet^{-1} \overline{\Mvet}_{\text{r}}$.
	As $(\Mvet_{\text{cs}}\Mvet_{\text{te}})^J\widehat{\xvet} = \Qvet^{-1}(\overline{\Mvet}_{\text{cs}}\overline{\Mvet}_{\text{te}})^J\overline{\xvet}$, it follows
	$$	
	\widetilde{\xvet}_{\text{tcs}} - \widetilde{\xvet}_{\text{oct}} = \Qvet^{-1}\left[(\overline{\Mvet}_{\text{cs}}\overline{\Mvet}_{\text{te}})^J - \overline{\Mvet}_{\text{ct}}\right]\Qvet\widehat{\xvet}.
	$$
	As the alternating orthogonal projections onto two subspaces converges in norm to the projection onto the intersection of the two subspaces\footnote{Thanks to Tommaso Proietti for suggesting considering alternating projections.}, we have $\lVert\widetilde{\xvet}_{\text{tcs}} - \widetilde{\xvet}_{\text{oct}}\rVert\xrightarrow{J \rightarrow +\infty} 0$. The convergence of $\widetilde{\xvet}_{\text{cst}}$ to $\widetilde{\xvet}_{\text{oct}}$ is easily obtained by inverting the order in which matrices $\Mvet_{\text{cs}}$ and $\Mvet_{\text{te}}$ are applied in the proof. 
\end{proof}

\section{Empirical application}\label{sec:forexp}

The dataset PV324 \citep{Yang2017cs, Yang2017te, Yagli2019, DiFonzoGiro2022, DiFonzoGiro2023SE}
refers to simulated data from 318 photovoltaic (PV) plants located in California. The hourly irradiation data from these PV plants are structured into three hierarchical levels to facilitate various aggregation analyses (\autoref{fig:pv324}).

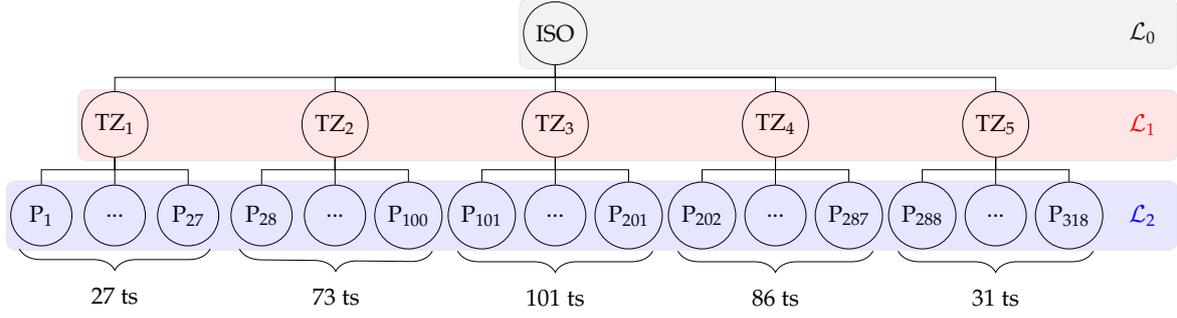
\begin{figure}[!tb]
	\resizebox{\linewidth}{!}{
		\begin{tikzpicture} [baseline=(current  bounding  box.center),
			every node/.append style={shape=circle,
				draw=black,
				minimum size=1cm}
			]
			\node at (0, 0) (A1){P$_{1}$};
			\node at (1.2, 0) (A2){$...$};
			\node at (2.4, 0) (A3){P$_{27}$};
			\node at (1.2, 1.5) (A){TZ$_1$};
			
			\relation{0.2}{A1}{A};
			\relation{0.2}{A2}{A};
			\relation{0.2}{A3}{A};
			
			\node at (3.6, 0) (B1){P$_{28}$};
			\node at (4.8, 0) (B2){$...$};
			\node at (6, 0) (B3){P$_{100}$};
			\node at (4.8, 1.5) (B){TZ$_2$};
			
			\relation{0.2}{B1}{B};
			\relation{0.2}{B2}{B};
			\relation{0.2}{B3}{B};
			
			\node at (7.2, 0) (C1){P$_{101}$};
			\node at (8.4, 0) (C2){$...$};
			\node at (9.6, 0) (C3){P$_{201}$};
			\node at (8.4, 1.5) (C){TZ$_3$};
			
			\relation{0.2}{C1}{C};
			\relation{0.2}{C2}{C};
			\relation{0.2}{C3}{C};
			
			\node at (10.8, 0) (D1){P$_{202}$};
			\node at (12, 0) (D2){$...$};
			\node at (13.2, 0) (D3){P$_{287}$};
			\node at (12, 1.5) (D){TZ$_4$};
			
			\relation{0.2}{D1}{D};
			\relation{0.2}{D2}{D};
			\relation{0.2}{D3}{D};
			
			\node at (14.4, 0) (E1){P$_{288}$};
			\node at (15.6, 0) (E2){$...$};
			\node at (16.8, 0) (E3){P$_{318}$};
			\node at (15.6, 1.5) (E){TZ$_5$};
			
			\relation{0.2}{E1}{E};
			\relation{0.2}{E2}{E};
			\relation{0.2}{E3}{E};
			
			\node at (8.4, 3) (T){ISO};
			\relation{0.2}{A}{T};
			\relation{0.2}{B}{T};
			\relation{0.2}{C}{T};
			\relation{0.2}{D}{T};
			\relation{0.2}{E}{T};
			
			\node[draw = none] at (18, 3) (L0){$\mathcal{L}_0$};
			\node[draw = none, color =red] at (18, 1.5) (L1){$\mathcal{L}_1$};
			\node[draw = none, color = blue] at (18, 0) (L2){$\mathcal{L}_2$};
			
			\node[shape=rectangle, opacity=0.1, rounded corners, inner sep=6pt, fill = blue, fit=(A1.south west)(L2.north east)](Bcs){};
			
			\node[shape=rectangle, opacity=0.1, rounded corners, inner sep=6pt, fill = red, fit=(A.south west)(L1.north east)](Bcs){};
			\node[shape=rectangle, opacity=0.1, rounded corners, inner sep=6pt, fill = gray, fit=(T.south west)(L0.north east)](Bcs){};
			
			\draw [decorate,decoration={brace,amplitude=10pt,mirror, raise = 0.5em}] (A1.south west) -- (A3.south east) node[draw=none, midway,yshift=-2.5em]{27 ts};
			\draw [decorate,decoration={brace,amplitude=10pt,mirror, raise = 0.5em}] (B1.south west) -- (B3.south east) node[draw=none, midway,yshift=-2.5em]{73 ts};
			\draw [decorate,decoration={brace,amplitude=10pt,mirror, raise = 0.5em}] (C1.south west) -- (C3.south east) node[draw=none, midway,yshift=-2.5em]{101 ts};
			\draw [decorate,decoration={brace,amplitude=10pt,mirror, raise = 0.5em}] (D1.south west) -- (D3.south east) node[draw=none, midway,yshift=-2.5em]{86 ts};
			\draw [decorate,decoration={brace,amplitude=10pt,mirror, raise = 0.5em}] (E1.south west) -- (E3.south east) node[draw=none, midway,yshift=-2.5em]{31 ts};
	\end{tikzpicture}}
	\caption{PV324 hierarchy: Independent System Operator level (1 time series, $\mathcal{L}_0$), Transmission Zones level (5 time series, $\mathcal{L}_1$), and Plant level (318 time series, $\mathcal{L}_2$).\label{fig:pv324}}
\end{figure}

At the highest level, $\mathcal{L}_0$, there is a single time series representing the total irradiation data for the Independent System Operator (ISO), which is derived by summing the irradiation data from all 318 PV plants. The intermediate level, $\mathcal{L}_1$, consists of five time series, each representing one of the Transmission Zones (TZ) within California. These TZ time series are aggregated from groups of 27, 73, 101, 86, and 31 PV plants, respectively. At the most granular level, $\mathcal{L}_2$, there are 318 time series, each corresponding to the hourly irradiation data of an individual PV plant.

To assess the new results presented so far, we consider the forecasting experiment pursued by \cite{DiFonzoGiro2023SE} aligned with the operational forecasting requirements of the public corporation managing power grid operations in California \citep[CAISO,][]{Makarov2011, Kleissl2013}, ensuring that the experiment closely mimics real-world conditions for grid management and power forecasting operations\footnote{A complete set of results is available at the GitHub repository \githuburl}. The forecasts were generated with a rolling window of 14 days (336 hours) as training set for a horizon of two days (48 hours), with a specific focus on evaluating the performance of the day 2 (operating day). The base forecasts for the 318 plant-level ($\mathcal{L}_2$) time series were derived from numerical weather prediction (NWP) forecasts provided by 3TIER \citep{3TIER2010}. For the six aggregated time series at $\mathcal{L}_0$ and $\mathcal{L}_1$ levels and for the $\mathcal{L}_2$ time series at various temporal aggregations (2, 3, 4, 6, 8, 12, and 24 hours), the exponential smoothing state space model (ETS) from the R-package \texttt{forecast} \citep{forecast2021} was employed to generate the base forecasts. We consider a total of 3 benchmarks:
\begin{itemize}[leftmargin = !, labelwidth=\widthof{3TIER$_{\text{bu}}$}, itemsep = 1.5mm, topsep = 1.5mm]
	\item[$base$] NWP forecasts for the 318 plant-level time series and ETS for all the aggregated series;
	\item[PERS$_{\text{bu}}$] the forecast at any given hour is equal to the observed value at same hour on the previous day and the forecasts for any temporal granularity are obtained through cross-temporal bottom-up;
	\item[3TIER$_{\text{bu}}$] the NWP forecasts provided by 3TIER are used for the hourly aggregated level, cross-temporal bottom-up is used for all the other levels.
\end{itemize}

Standard forecast reconciliation techniques can sometimes result in negative values. This is problematic when dealing with variables that are naturally non-negative, like PV power generation, as negative values are illogical in such scenarios. The set-negative-to-zero (sntz) approach proposed by \cite{DiFonzoGiro2023SE} is used to avoid this issue. 
In \autoref{fig:alltm}, we compare computational times and CPU memory allocation across iterative ($ite_{\text{tcs}}$ and $ite_{\text{cst}}$), optimal ($oct$), or heuristic (ka$_{\text{tcs}}$ and ka$_{\text{cst}}$) approaches, in the following cases:
\begin{itemize}[leftmargin = !, labelwidth=\widthof{$\;ka_{\text{tcs}}$}, itemsep = 1.5mm, topsep = 1.5mm]
	\item[$ols$] $ite_{\text{tcs/cst}}$ and $ka_{\text{tcs/cst}}$ with $\Wvet = \Ivet_n$ and $\Omegavet = \Ivet_{(k^\ast + m)}$ equivalent to $oct$ with $\Sigmavet_{\text{ct}} = \Ivet_{n(k^\ast+m)}$ (\autoref{thm:ite});
	\item[$str$] $ite_{\text{tcs/cst}}$ and $ka_{\text{tcs/cst}}$ with $\Wvet = \text{diag}(\Svet_{\text{cs}}\Unovet_{n_b})$ and $\Omegavet = \text{diag}(\Svet_{\text{te}}\Unovet_{m})$ equivalent to $\Sigmavet_{\text{ct}} = \Ivet_{n(k^\ast+m)}$ and $oct(str)$ with $\Sigmavet_{\text{ct}} = \text{diag}(\Svet_{\text{ct}}\Unovet_{mn_b})$ (\autoref{thm:ite});
	\item[$str_{\text{cs}}$] $ite_{\text{tcs/cst}}$ and $ka_{\text{tcs/cst}}$ with $\Wvet = \text{diag}(\Svet_{\text{cs}}\Unovet_{n_b})$ and $\Omegavet = \Ivet_{(k^\ast + m)}$ equivalent to $oct$ with $\Sigmavet_{\text{ct}} = \text{diag}(\Svet_{\text{cs}}\Unovet_{n_b}) \otimes \Ivet_{(k^\ast + m)}$ (\autoref{thm:ite});
	\item[$str_{\text{te}}$] $ite_{\text{tcs/cst}}$ and $ka_{\text{tcs/cst}}$ with $\Wvet = \Ivet_n$ and $\Omegavet = \text{diag}(\Svet_{\text{te}}\Unovet_{m})$ equivalent to $oct$ with $\Sigmavet_{\text{ct}} = \Ivet_n \otimes \text{diag}(\Svet_{\text{te}}\Unovet_{m})$ (\autoref{thm:ite});
	\item[$wlsv$] $ite_{\text{tcs/cst}}$ and $ka_{\text{tcs/cst}}$ alternating temporal series variance scaling and cross-sectional series variance reconciliation approaches, and $oct$ using the cross-temporal series variance scaling covariance. Following \autoref{thm:ite2}, $ite_{\text{tcs/cst}}$ converges to $oct$.
\end{itemize}

\noindent  Specifically, Figures \ref{fig:time} and \ref{fig:mem} show the computational performance of the forecasting experiment over 350 replications. Figure \ref{fig:time} illustrates the computational times, measured in seconds, required for each approach, while Figure \ref{fig:mem} shows the corresponding CPU memory usage in megabytes. In both figures, each panel represents a different reconciliation approaches, with the techniques displayed along the y-axes for comparison.

\begin{figure}[!tbh]
	\centering
	\begin{subfigure}[b]{1\linewidth}
		\centering
		\includegraphics[width=\linewidth]{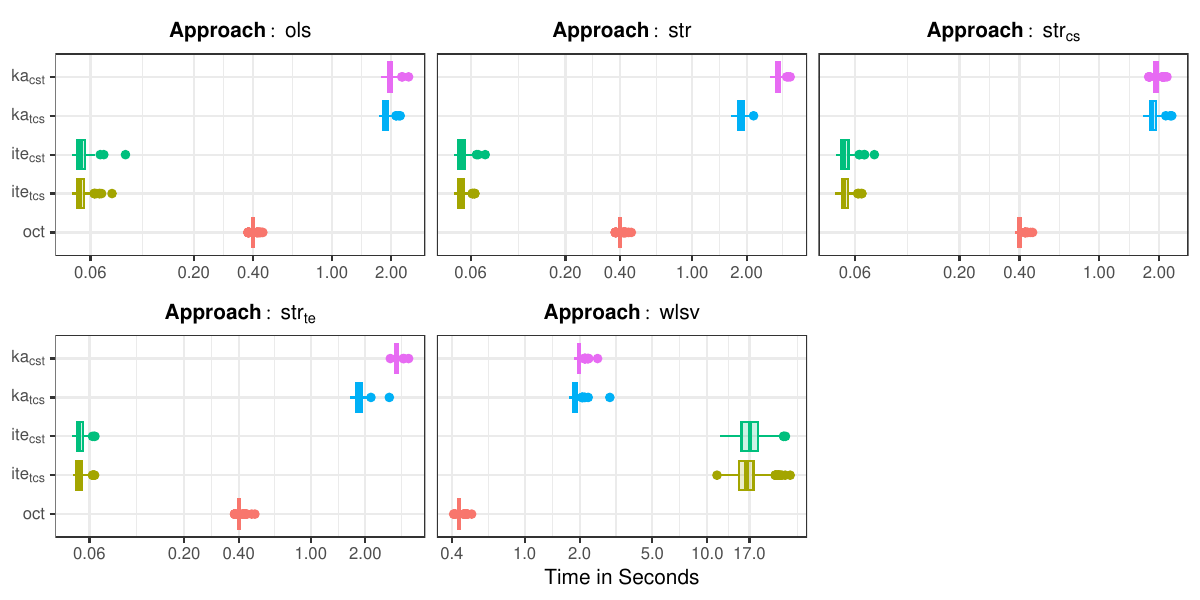}
		\caption{Computational times in seconds.\label{fig:time}}
	\end{subfigure}
	\begin{subfigure}[b]{1\linewidth}
		\centering
		\includegraphics[width=\linewidth]{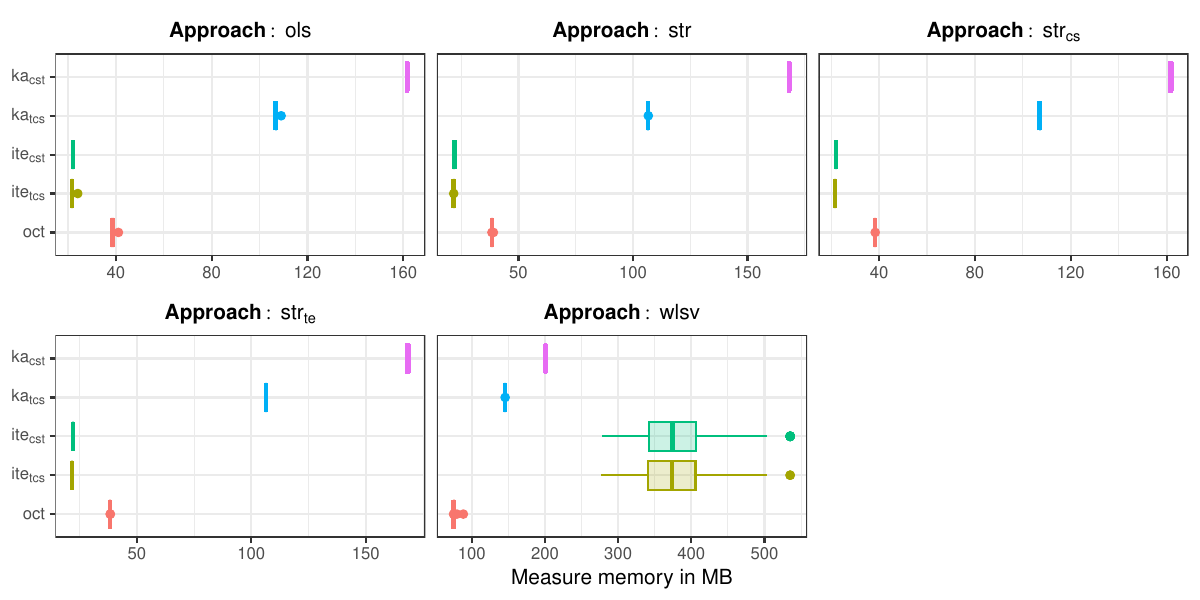}
		\caption{Computational CPU memory in megabyte (MB).\label{fig:mem}}
	\end{subfigure}
	\caption{Computational times of iterative, optimal and heuristic KA reconciliation approaches across different methods ($ols$, $str$, and $wlsv$) for the 350 replications of the forecasting experiment. System’s hardware and software specifications: \texttt{Intel(R) Core(TM) i7-10700 CPU 2.90GHz 2.90 GHz}, \texttt{64 GB RAM}, \texttt{R v4.4.0} and \texttt{FoReco v1.0} \citep{FoReco2024}.}\label{fig:alltm}
\end{figure}

In agreement with \autoref{thm:ite},  $ols$, $str$, $str_{\text{cs}}$, and $str_{\text{te}}$ give identical reconciled forecasts. When using $wlsv$, both iterative heuristics converge to the optimal solution, as stated by \autoref{thm:ite2}, while the KA heuristics ($ka_{\text{tcs}}$ and $ka_{\text{cst}}$) do not. In details, the iterative approaches with a single iteration ($ols$, $str$, $str_{\text{cs}}$, and $str_{\text{te}}$) are faster and use less memory than all other methods. However, when the number of iterations exceeds 1 ($wlsv$), it becomes the most expensive in terms of both CPU memory and time. Overall, the empirical evidence highlights a clear advantage of the optimal approach, emphasizing its computational efficiency and practical suitability compared to the equivalent KA heuristics. The same result holds for the optimal approach $wlsv$ when compared to the corresponding iterative heuristic.

\begin{table}[!tb]
	\centering
	\caption{Forecast accuracy of selected cross-temporal reconciliation approaches and base forecasts in terms of nRMSE(\%) for each temporal aggregation order ($k = 1$ hourly forecasts up to $k = 24$ daily forecasts). Forecast horizon: operating day. Approaches performing worse than PERS$_{\text{bu}}$ are highlighted in red; bold entries and italic entries identify the best and the second best performing approaches, respectively. }\label{tab:nRMSE}
	\begingroup
	\spacingset{1}
	
\begin{tabular}[t]{r|cccccccc}
\toprule
\multicolumn{1}{c}{\textbf{}} & \multicolumn{8}{c}{\textbf{Temporal aggregation orders}} \\
\textbf{App.} & $k=1$ & $k=2$ & $k=3$ & $k=4$ & $k=6$ & $k=8$ & $k=12$ & $k=24$\\
\midrule
\addlinespace[0.3em]
\multicolumn{1}{c}{} & \multicolumn{8}{c}{$\mathcal{L}_0$}\\
PERS$_{bu}$ & \textcolor{black}{34.62} & \textcolor{black}{34.14} & \textcolor{black}{33.35} & \textcolor{black}{33.11} & \textcolor{black}{29.45} & \textcolor{black}{32.23} & \textcolor{black}{20.83} & \textcolor{black}{20.23}\\
3TIER$_{bu}$ & \textcolor{black}{26.03} & \textcolor{black}{25.34} & \textcolor{black}{24.56} & \textcolor{black}{23.31} & \textcolor{black}{22.09} & \textcolor{black}{\textbf{19.89}} & \textcolor{black}{15.70} & \textcolor{black}{\textbf{12.48}}\\
$base$ & \textcolor{black}{27.85} & \textcolor{black}{27.55} & \textcolor{black}{28.31} & \textcolor{black}{29.69} & \textcolor{black}{28.01} & \textcolor{red}{35.32} & \textcolor{red}{21.04} & \textcolor{black}{18.17}\\
$ols$ & \textcolor{black}{30.15} & \textcolor{black}{29.40} & \textcolor{black}{28.05} & \textcolor{black}{28.41} & \textcolor{black}{24.87} & \textcolor{black}{27.66} & \textcolor{black}{17.71} & \textcolor{black}{17.22}\\
$str$ & \textcolor{black}{26.64} & \textcolor{black}{26.27} & \textcolor{black}{25.56} & \textcolor{black}{25.48} & \textcolor{black}{22.86} & \textcolor{black}{24.71} & \textcolor{black}{16.25} & \textcolor{black}{15.73}\\
$str_{te}$ & \textcolor{black}{28.18} & \textcolor{black}{27.76} & \textcolor{black}{27.01} & \textcolor{black}{26.91} & \textcolor{black}{24.05} & \textcolor{black}{26.14} & \textcolor{black}{17.06} & \textcolor{black}{16.56}\\
$str_{cs}$ & \textcolor{black}{28.53} & \textcolor{black}{28.02} & \textcolor{black}{26.98} & \textcolor{black}{27.18} & \textcolor{black}{24.07} & \textcolor{black}{26.44} & \textcolor{black}{17.14} & \textcolor{black}{16.64}\\
$wlsv$ & \textcolor{black}{\textbf{23.63}} & \textcolor{black}{\em{23.18}} & \textcolor{black}{\em{22.55}} & \textcolor{black}{\em{22.32}} & \textcolor{black}{\em{20.16}} & \textcolor{black}{21.40} & \textcolor{black}{\em{14.30}} & \textcolor{black}{13.64}\\
ka$_{tcs}$ & \textcolor{black}{\em{23.63}} & \textcolor{black}{\textbf{23.06}} & \textcolor{black}{\textbf{22.43}} & \textcolor{black}{\textbf{22.19}} & \textcolor{black}{\textbf{20.04}} & \textcolor{black}{\em{21.25}} & \textcolor{black}{\textbf{14.23}} & \textcolor{black}{\em{13.55}}\\
ka$_{cst}$ & \textcolor{black}{24.21} & \textcolor{black}{23.77} & \textcolor{black}{23.14} & \textcolor{black}{22.94} & \textcolor{black}{20.70} & \textcolor{black}{22.05} & \textcolor{black}{14.66} & \textcolor{black}{14.03}\\
\addlinespace[0.3em]
\multicolumn{1}{c}{} & \multicolumn{8}{c}{$\mathcal{L}_1$}\\
PERS$_{bu}$ & \textcolor{black}{43.15} & \textcolor{black}{42.42} & \textcolor{black}{41.27} & \textcolor{black}{40.81} & \textcolor{black}{36.29} & \textcolor{black}{39.19} & \textcolor{black}{25.66} & \textcolor{black}{24.57}\\
3TIER$_{bu}$ & \textcolor{black}{33.95} & \textcolor{black}{32.94} & \textcolor{black}{31.90} & \textcolor{black}{30.62} & \textcolor{black}{28.02} & \textcolor{black}{26.99} & \textcolor{black}{19.87} & \textcolor{black}{\textbf{16.75}}\\
$base$ & \textcolor{black}{34.24} & \textcolor{black}{34.04} & \textcolor{black}{33.79} & \textcolor{black}{35.01} & \textcolor{black}{33.11} & \textcolor{red}{40.51} & \textcolor{black}{25.16} & \textcolor{black}{20.94}\\
$ols$ & \textcolor{black}{38.19} & \textcolor{black}{36.98} & \textcolor{black}{34.83} & \textcolor{black}{35.38} & \textcolor{black}{30.72} & \textcolor{black}{34.05} & \textcolor{black}{21.80} & \textcolor{black}{20.94}\\
$str$ & \textcolor{black}{32.99} & \textcolor{black}{32.41} & \textcolor{black}{31.45} & \textcolor{black}{31.23} & \textcolor{black}{27.96} & \textcolor{black}{29.91} & \textcolor{black}{19.88} & \textcolor{black}{19.00}\\
$str_{te}$ & \textcolor{black}{35.02} & \textcolor{black}{34.38} & \textcolor{black}{33.31} & \textcolor{black}{33.09} & \textcolor{black}{29.52} & \textcolor{black}{31.76} & \textcolor{black}{20.94} & \textcolor{black}{20.05}\\
$str_{cs}$ & \textcolor{black}{35.17} & \textcolor{black}{34.43} & \textcolor{black}{33.03} & \textcolor{black}{33.18} & \textcolor{black}{29.33} & \textcolor{black}{31.89} & \textcolor{black}{20.87} & \textcolor{black}{20.01}\\
$wlsv$ & \textcolor{black}{\textbf{30.21}} & \textcolor{black}{\em{29.52}} & \textcolor{black}{\em{28.66}} & \textcolor{black}{\em{28.27}} & \textcolor{black}{\em{25.40}} & \textcolor{black}{\em{26.79}} & \textcolor{black}{\em{18.02}} & \textcolor{black}{17.01}\\
ka$_{tcs}$ & \textcolor{black}{\em{30.24}} & \textcolor{black}{\textbf{29.40}} & \textcolor{black}{\textbf{28.53}} & \textcolor{black}{\textbf{28.13}} & \textcolor{black}{\textbf{25.27}} & \textcolor{black}{\textbf{26.64}} & \textcolor{black}{\textbf{17.95}} & \textcolor{black}{\em{16.92}}\\
ka$_{cst}$ & \textcolor{black}{30.82} & \textcolor{black}{30.15} & \textcolor{black}{29.27} & \textcolor{black}{28.91} & \textcolor{black}{25.97} & \textcolor{black}{27.47} & \textcolor{black}{18.41} & \textcolor{black}{17.43}\\
\addlinespace[0.3em]
\multicolumn{1}{c}{} & \multicolumn{8}{c}{$\mathcal{L}_2$}\\
PERS$_{bu}$ & \textcolor{black}{59.75} & \textcolor{black}{56.81} & \textcolor{black}{54.23} & \textcolor{black}{52.87} & \textcolor{black}{46.82} & \textcolor{black}{49.07} & \textcolor{black}{33.12} & \textcolor{black}{30.65}\\
3TIER$_{bu}$ & \textcolor{black}{53.46} & \textcolor{black}{50.57} & \textcolor{black}{48.33} & \textcolor{black}{46.19} & \textcolor{black}{41.36} & \textcolor{black}{40.72} & \textcolor{black}{29.28} & \textcolor{black}{25.19}\\
$base$ & \textcolor{black}{53.46} & \textcolor{black}{47.22} & \textcolor{black}{44.87} & \textcolor{black}{44.30} & \textcolor{black}{39.68} & \textcolor{black}{42.66} & \textcolor{black}{30.92} & \textcolor{black}{25.82}\\
$ols$ & \textcolor{black}{50.69} & \textcolor{black}{47.64} & \textcolor{black}{44.66} & \textcolor{black}{44.53} & \textcolor{black}{38.86} & \textcolor{black}{41.71} & \textcolor{black}{27.64} & \textcolor{black}{25.82}\\
$str$ & \textcolor{black}{46.51} & \textcolor{black}{43.80} & \textcolor{black}{41.80} & \textcolor{black}{40.83} & \textcolor{black}{36.34} & \textcolor{black}{37.90} & \textcolor{black}{25.85} & \textcolor{black}{23.97}\\
$str_{te}$ & \textcolor{black}{47.95} & \textcolor{black}{45.27} & \textcolor{black}{43.22} & \textcolor{black}{42.26} & \textcolor{black}{37.55} & \textcolor{black}{39.38} & \textcolor{black}{26.67} & \textcolor{black}{24.81}\\
$str_{cs}$ & \textcolor{black}{48.84} & \textcolor{black}{46.02} & \textcolor{black}{43.48} & \textcolor{black}{43.08} & \textcolor{black}{37.92} & \textcolor{black}{40.24} & \textcolor{black}{27.00} & \textcolor{black}{25.17}\\
$wlsv$ & \textcolor{black}{\textbf{44.44}} & \textcolor{black}{\em{41.52}} & \textcolor{black}{\em{39.63}} & \textcolor{black}{\em{38.43}} & \textcolor{black}{\em{34.27}} & \textcolor{black}{\em{35.25}} & \textcolor{black}{\em{24.35}} & \textcolor{black}{\em{22.27}}\\
ka$_{tcs}$ & \textcolor{black}{\em{44.48}} & \textcolor{black}{\textbf{41.45}} & \textcolor{black}{\textbf{39.55}} & \textcolor{black}{\textbf{38.34}} & \textcolor{black}{\textbf{34.18}} & \textcolor{black}{\textbf{35.15}} & \textcolor{black}{\textbf{24.30}} & \textcolor{black}{\textbf{22.21}}\\
ka$_{cst}$ & \textcolor{black}{44.76} & \textcolor{black}{41.86} & \textcolor{black}{39.98} & \textcolor{black}{38.80} & \textcolor{black}{34.59} & \textcolor{black}{35.67} & \textcolor{black}{24.56} & \textcolor{black}{22.52}\\
\bottomrule
\end{tabular}

	\endgroup
\end{table}

Following \cite{Yagli2019} and \cite{DiFonzoGiro2023SE}, the accuracy of the considered approaches is measured in terms of normalized Root Mean Square Error (nRMSE):
$$
\mbox{nRMSE}_{i,j}^{[k]} = \frac{\sqrt{\displaystyle\frac{1}{L_k}\sum_{l = 1}^{L_k} \left(\overline{y}_{i,j,l}^{[k]} - y_{i,l}^{[k]}\right)^2}}{\displaystyle\frac{1}{L_k}\sum_{l = 1}^{L_k} y_{i,l}^{[k]}}
$$
where $i = 1,...,n$, denotes the series, $k \in \mathcal{K}$, $\overline{y}_j$ with $j = 0,...,J$, denotes the forecasting approach, and $L_k = 350 \displaystyle\frac{m}{k}$. In \autoref{tab:nRMSE}, we report the nRMSE at different cross-sectional ($\mathcal{L}_0$, $\mathcal{L}_1$ and $\mathcal{L}_2$) and temporal ($k \in \{1,2,3,4,6,8,12,24\}$) levels. Note that we compare the optimal solution for all the approaches ($ols$, $str$, $str_{\text{cs}}$, $str_{\text{te}}$, $wlsv$) as well as the results for ka$_{\text{cst}}$ and ka$_{\text{tcs}}$ alternating the temporal series variance scaling and the cross-sectional series variance matrices. The PERS$_{\text{bu}}$ approach, used as a baseline, almost always has the highest nRMSE values. In contrast, the 3TIER$_{\text{bu}}$ and $str$ approaches demonstrate significant improvements over the base forecasts, especially at higher temporal aggregation orders, as discussed in \cite{DiFonzoGiro2023SE}. $str_{\text{te}}$ and $str_{\text{cs}}$ are always better than $ols$, but worse than $str$. Notably, the $wlsv$ and ka$_{\text{tcs}}$ approaches yield the lowest nRMSE values, reflecting their superior forecasting accuracy across all cross-sectional levels and temporal aggregation orders.

The Multiple Comparison with the Best (MCB) Nemenyi test \citep{koning2005, Kourentzes2019, Makridakis2022} allows to establish if the forecasting performances of the considered techniques are significantly different. This procedure shown in \autoref{fig:nem} evaluates the accuracy of hourly and daily solar irradiation forecasts focusing on the operating day forecast horizon for all the 324 series across $\mathcal{L}_0$, $\mathcal{L}_1$, and $\mathcal{L}_2$ levels. For daily forecasts, the $wlsv$ approach is the most accurate, followed closely by ka$_{\text{tcs}}$ and ka$_{\text{cst}}$. The comparable performance of these three approaches is evident from their overlapping intervals, that means no statistically significant difference in their accuracy. On the other hand, approaches such as PERS$_{\text{bu}}$, $ols$, and $base$ show worse performance compared to $wlsv$, as indicated by the non-overlapping intervals. For hourly forecasts, a similar trend emerges, with ka$_{\text{tcs}}$ showing superior accuracy, immediately followed by $wlsv$ and ka$_{\text{cst}}$. However, the intervals of $wlsv$, ka$_{\text{tcs}}$, and ka$_{\text{cst}}$ slightly overlap, suggesting that their performances are statistically different. It should be noted that even if $str$ does not perform as well as the top three approaches, it consistently outperforms all the remaining approaches, including $base$, PERS$_{\text{bu}}$ and 3TIER$_{\text{bu}}$.

\begin{figure}[!tb]
	\centering
	\includegraphics[width=0.95\linewidth]{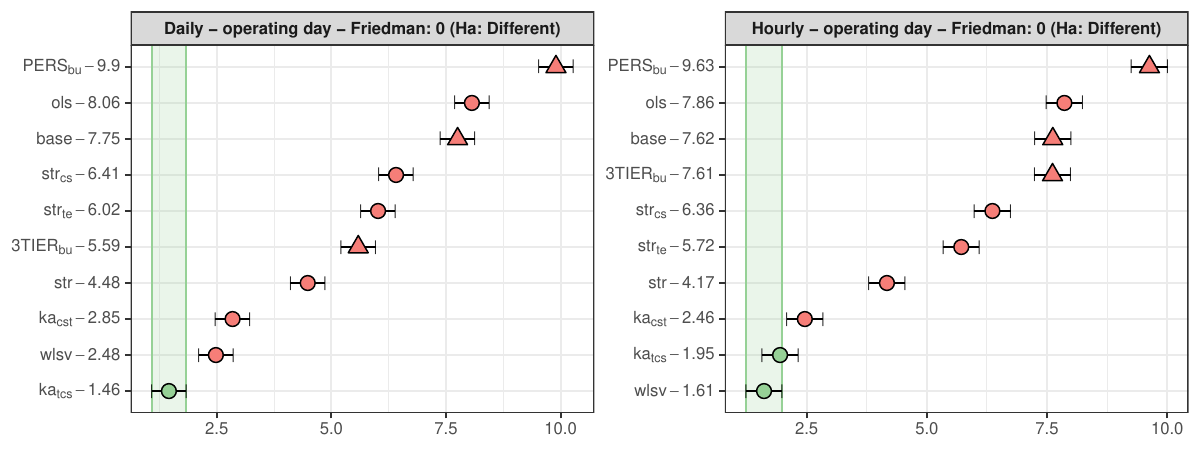}
	\caption{MCB-Nemenyi test on selected cross-temporal reconciliation approaches with operating day forecast horizon. $\mathcal{L}_0$, $\mathcal{L}_1$, $\mathcal{L}_2$ levels (324 series). Right panel: hourly forecasts; Left panel: daily forecasts. The mean rank of each approach is displayed to the right of their names. If the intervals of two forecast reconciliation approaches do not overlap, this indicates a statistically different performance. Thus, approaches that do not overlap with the green interval are considered significantly worse than the best, and vice-versa.}
	\label{fig:nem}
\end{figure}

\section{Conclusion}
\label{sec: conclusion}

This paper investigates the relationships between sequential, iterative and optimal combination cross-temporal forecast reconciliation approaches. Our study establishes conditions under which a sequential approach, whether starting from a cross-sectional or temporal reconciliation, is equivalent to the heuristics proposed by \cite{Kourentzes2019} and \cite{DiFonzoGiro2023}. We found that when error covariance matrices remain consistent across different levels and time granularities, these heuristics give the optimal combination approach solution that employs a separable covariance matrix based on the Kronecker product. Moreover, we demonstrate that specific patterns in the error covariance matrix of the iterative reconciliation converge to the optimal solution, regardless of the order of application of the uni-dimensional reconciliation steps. In addition, we offer a comprehensive framework for understanding and improving cross-temporal forecast reconciliation, taking into consideration not only the forecast accuracy, but also computational aspects that have so far been quite overlooked.

Our empirical application utilizes the PV324 dataset \citep{Yagli2019}, with 324 simulated hourly irradiation data from photovoltaic plants in California, structured into a three-level hierarchy. The experiment employs a rolling window forecasting approach with a 14-day training set and a two-day forecast horizon. The findings indicate that iterative approaches with a single iteration are computationally faster and utilize less memory compared to methods requiring multiple iterations. Moreover, the optimal combination approach demonstrates better trade-off between accuracy and computational efficiency compared to the two heuristics and the different benchmarks.

These results have significant implications for the field of time series forecasting, particularly in applications requiring high-dimensional and hierarchical forecast reconciliation. By highlighting the conditions under which different reconciliation techniques are equivalent, this research provides valuable insights for practitioners aiming to optimize forecasting accuracy and efficiency.

Future research could explore the application of these findings to other types of hierarchical structures and different domains, as well as the potential integration of machine learning techniques \citep{Rombouts2024} to further enhance forecast reconciliation processes.

\section*{Acknowledgments}
The authors thank all the participants to the 52$^{nd}$ Scientific Meeting of the Italian Statistical Society in Bari (Italy) and the 44$^{th}$ International Symposium on Forecasting in Dijon (France). Funding support for this article was provided by the Ministero dell’Università e della Ricerca with the project PRIN2022 “PRICE: A New Paradigm for High Frequency Finance” 2022C799SX.

\phantomsection\addcontentsline{toc}{section}{References}

\bibliographystyle{apalike3.bst}
\bibliography{mybibfile.bib}

\end{document}